\documentclass[pra,floatfix,amsmath,twocolumn,showpacs,superscriptaddress]{revtex4-1}
\bibliographystyle{apsrev}
\usepackage[ansinew]{inputenc}

\usepackage{verbatim}
\usepackage{epsfig}
\usepackage{pst-all}
\usepackage{color}
\usepackage{dcolumn}
\usepackage{amsmath}
\usepackage{amsthm}
\usepackage{bm}
\usepackage{layout}
\usepackage{float}
\usepackage{txfonts}
\usepackage{amsfonts}
\usepackage{amssymb}%
\setcounter{MaxMatrixCols}{30}
\usepackage[ansinew]{inputenc}
\usepackage{amsmath}
\usepackage{amsthm}
\usepackage{float}
\usepackage{amsfonts}
\usepackage{amssymb}

\newtheorem{lemma}{Lemma}
\newtheorem{thm}[lemma]{Theorem}

\begin{document}

\title{Monogamy deficit for quantum correlations in multipartite quantum system}

\author{Si-Yuan Liu}
\affiliation{Institute of Modern Physics, Northwest University, Xian
710069,  China }
\affiliation{Institute of Physics, Chinese Academy of Sciences, Beijing 100190, China}
\author{Bo Li}
\affiliation{Institute of Physics, Chinese Academy of Sciences, Beijing 100190, China}
\author{Wen-Li Yang }
\affiliation{Institute of Modern Physics, Northwest University, Xian
710069,  China }
\author{Heng Fan}
\email{hfan@iphy.ac.cn}
\affiliation{Institute of Physics, Chinese Academy of Sciences, Beijing 100190, China}

\date{\today}

\begin{abstract}
We introduce the concept of monogamy deficit for quantum correlation by combining together
two types of monogamy inequalities depending on different measurement sides.
For tripartite pure state, we demonstrate a relation which connects two types of monogamy inequalities
for quantum discord and provide the difference between them. 
By using this relation, we obtain an unified physical interpretation for these two monogamy deficit.
 In addition,  we find an interesting fact that there is a general monogamy condition
 for several quantum correlations for tripartite pure states.
 We then provide a necessary and sufficient condition for the establishment of
 one kind of monogamy inequality for tripartite mixed state
  and generalize it to multipartite quantum state.
\end{abstract}

\pacs{03.67.Mn, 03.65.Ud}

\maketitle

\section{Introduction}
Quantum correlations, such as entanglement and quantum discord,
are assumed to be resources in quantum information processing and are different from classical
correlations. On the other hand, in general, entanglement and discord
are different from each other.
Previous studies focus on entanglement which
is a special quantum correlation enabling fascinating quantum information tasks such as
super-dense coding\cite{CH}, teleportation\cite{CH2}, quantum cryptography\cite{AK},
remote-state preparation\cite{AK2} and so on. However,
some quantum applications superior than their classical counterparts are found
with vanishing or negligible entanglement \cite{CH3,EK,AD}.
In this sense, entanglement seems not
capture all the quantum features of quantum correlations.
So other measures of quantum correlations are proposed. Among those measures
that in general go beyond entanglement, quantum
discord is a widely accepted one in recent years
\cite{Zurek,Vedral,key-15}. The analytic results of quantum discord
and its physical meaning are studied extensively, for example, in Refs. \cite{AD,PG,AS,AS2,DC}.
The experiments about quantum discord are implemented \cite{RA,RA2}.
Quantum discord can also be generalized to the multipartite
situation \cite{WO,KM,IC,ccm11}, for more results, see a recent review paper \cite{key-15}.

There are many fundamental differences between classical correlation
and quantum correlations. One of them is  the shareability of correlation  among
many parties. Generally speaking, classical correlation can be freely
shared among many parties, while quantum ones do not have this property.
For example, for tripartite pure state, if two parties are highly entangled,
they cannot have a large amount of entanglement shared with a
third one. The limits on the shareability of quantum correlations
are described by monogamy inequalities. Much progresses have already been made
about the monogamy properties of various quantum correlations \cite{JO1,KOa,SA,key-12,key-13,key-14}.
As one application,
the monogamy property of quantum correlations also play a fundamental role for
the security of the quantum key
distributions\cite{LoChau,MP}. Some known monogamy properties of entanglement
measure are, for example, concurrence and squashed entanglement \cite{FE,key-1,key-14,KO}.

It is shown that the monogamy relation does not always be satisfied by the quantum correlations \cite{JO1}.
So it is necessary to know when a specified quantum correlation can satisfy this property.
Concerning about quantum discord, in general, it does
not satisfy this nature \cite{MA}.
However, it may have some interesting applications in
case the monogamy condition is satisfied \cite{ma,hufan}.
We should note that there are two types
of monogamy inequalities for quantum discord since it is asymmetric depending on
the measurement side for a bipartite state \cite{XI}.
A necessary and sufficient condition
for one type of monogamy relation satisfying is given where only
one side of measurement is studied \cite{MA}. A natural question is then
that does there exist an analogous property for another class of monogamy with measurement taken
on a different side?  In this paper,  we first demonstrate
a relation between those two types of monogamy conditions for tripartite pure state and provide the difference between them. By using this relation, 
we provide an unified physical interpretation for these two monogamy deficit and generalize it to the $N$-partite pure state.
Then we give a necessary and
sufficient condition for the holding of the second type of monogamy relation
and  further generalize the result to the $N$-partite system.
In particular, those two types of monogamy relations are generally studied independently.
Our result that two monogamy inequalities can be combined together by introducing monogamy
deficit provides a new, in general, more complete viewpoint. This can enlighten
much research both on quantum correlation and monogamy property.

\section{The monogamy deficit for pure state}\label{connection1}

\subsection{The connection of two types of monogamy deficit}\label{connection}
Quantum discord is defined as the difference between mutual information, which
is accepted to be the total correlation,
and maximum classical mutual information \cite{Zurek,Vedral}
\begin{eqnarray}
D^{\rightarrow }(\rho _{AB}) &=&\widetilde{I}(\rho _{AB})-I^{\rightarrow }(\rho _{AB})
\notag \\
&=&S(\rho _{A})-S(\rho _{AB})+\min_{\{\mathcal{M} _{i}\}}\sum_{i}p_{i}S(\rho _{B|i}),
\label{discord}
\end{eqnarray}
where arrow ``$\rightarrow$'' means measurement on $`A'$ and ``$\leftarrow$'' means
measurement on $`B'$, ${{M} _{i}}$ represent POVM measurement performed on $A$
for a bipartite state $\rho _{AB}$. So quantum discord is considered describing the quantumness
of correlations.
In this paper, we mainly use the same notations as those in Ref.\cite{MA}.
We use the left arrow (``$\leftarrow$'') and the right arrow (``$\rightarrow$'')
to distinguish the side of the measurement. Also we have notations, $\widetilde{I}(\rho _{AB})=S(\rho _{A})-\widetilde{S}(\rho _{A|B})$, and
$\widetilde{S}(\rho _{A|B})=S(\rho _{AB})-S(\rho _{B})$, here $S(\sigma)=-{\rm tr}(\sigma\log_2\sigma)$ is the von Neumann entropy of
a density matrix $\sigma$.
By those definitions presented above, we know that quantum discord in general should be
asymmetric and depend on the measurement side, which can be either $A$ or $B$.
It is understandable that those two definitions possess different fundamental properties.

Recently, two kinds
of monogamy inequalities have been studied in Refs.\cite{MA,giorgi} and  \cite{XI}.
For a tripartite state $\rho _{ABC}$,
by combining two monogamy inequalities together, we define two kinds of monogamy deficit of quantum discord,
\begin{eqnarray}
\bigtriangleup_{D_A}^{\leftarrow }=D^{\leftarrow }(\rho_{A|BC})-D^{\leftarrow }(\rho_{AB})-D^{\leftarrow }(\rho_{AC}),
\label{monogam1}
\end{eqnarray}
\begin{eqnarray}
\bigtriangleup_{D_A}^{\rightarrow }=D^{\rightarrow }(\rho_{A|BC})-D^{\rightarrow }(\rho_{AB})-D^{\rightarrow }(\rho_{AC}).
\label{monogamyde}
\end{eqnarray}
It is worth noting that a similar quantity as defined in
Eq. (2) and (3) has also been introduced by Bera \emph{et al.} \cite{MN}.
In Eq. (\ref{monogam1}), the first term $D^{\leftarrow }(\rho_{A|BC})$
involves a positive operator valued measurement (POVM) performed on $B$ and $C$,
and the other involve measurements
only on $A$.

Because of the asymmetry of quantum discord, the above two monogamy deficit are apparently quite different.
In this paper, however, we find that there is a relation between them, which means that the
monogamy relations on one of them provide some limits on another.
We first have a following observation.
For two kinds of monogamy deficit $\bigtriangleup_{D_A}^{\leftarrow }$, $\bigtriangleup_{D_A}^{\rightarrow }$
of an arbitrary tripartite pure state $\rho _{ABC}$, we find,
 \begin{eqnarray}
\bigtriangleup_{D_A}^{\leftarrow }&=&\frac{1}{2}(\bigtriangleup_{D_B}^{\rightarrow }+\bigtriangleup_{D_C}^{\rightarrow }),
\label{1deficit} \\
\bigtriangleup_{D_A}^{\rightarrow }&=&\bigtriangleup_{D_B}^{\leftarrow }+\bigtriangleup_{D_C}^{\leftarrow }-\bigtriangleup_{D_A}^{\leftarrow }.
\label{relation}
\end{eqnarray}
The proof of those two relations can be the following.
For simplicity, denote $S(\rho_{B|A})$ as the optimal conditional entropy of $I^{\rightarrow }(\rho _{AB})$ after the measurement which defined as $\min_{\{\mathcal{M} _{i}\}}\sum_{i}p_{i}S(\rho _{B|i})$, that is  $I^{\rightarrow }(\rho _{AB})=S(\rho _{B})-\min_{\{\mathcal{M} _{i}\}}\sum_{i}p_{i}S(\rho _{B|i})=S(\rho _{B})-S(\rho_{B|A})$.
Using the Koashi-Winter formula \cite{KO}, we have $S(\rho_{B|A})=E(\rho_{BC})$, where $E(\rho_{BC})$ means the entanglement of formation for a bipartite state $\rho _{BC}$. Generally, for any tripartite pure state $|\psi\rangle_{A'A''A'''}$,
we  have $S(\rho_{A'|A''})=E(\rho_{A'A'''})$, where $A', A'', A'''$ correspond to any permutations of $A, B, C$. Further more, we find,
\begin{eqnarray}
D^{\leftarrow }(\rho _{A'A''}) &=&\widetilde{I}(\rho _{A'A''})-(S(\rho _{A'})-S(\rho_{A'|A''}))
\notag \\
&=&S(\rho _{A''})-S(\rho _{A'''})+E(\rho _{A'A'''}),
\notag \\
D^{\rightarrow }(\rho _{A'A''}) &=&\widetilde{I}(\rho _{A'A''})-(S(\rho _{A''})-S(\rho_{A''|A'}))
\notag \\
&=&S(\rho _{A'})-S(\rho _{A'''})+E(\rho _{A''A'''}).
\label{LEFt}
\end{eqnarray}
Inserting (\ref{LEFt}) into (\ref{monogam1}),(\ref{monogamyde}), we have that
\begin{eqnarray}
\bigtriangleup_{D_A}^{\leftarrow }&=&S(\rho _{A})-E(\rho _{AC})-E(\rho _{AB}),
\notag \\
\bigtriangleup_{D_A}^{\rightarrow }&=&S(\rho _{B})+S(\rho _{C})-S(\rho _{A})-2E(\rho _{BC}).
\label{monogam2}
\end{eqnarray}
Similarly, $\bigtriangleup_{D_B}^{\leftarrow }, \bigtriangleup_{D_C}^{\leftarrow }$ and $\bigtriangleup_{D_B}^{\rightarrow },
\bigtriangleup_{D_C}^{\rightarrow }$ can be obtained by permutating the indices of (\ref{monogam1}) and (\ref{monogamyde}).
Combining those results, we have (\ref{1deficit}), (\ref{relation}), which completes the proof.

The above relations are interesting. They tell us that the two kinds of monogamy inequalities
which was studied previously \cite{MA,giorgi,XI} actually are not independent. We find that  $\bigtriangleup_{D_A}^{\leftarrow }$,
which is the defined monogamy deficit having a coherent measurement taken on two parties $B$ and $C$, is precisely equal to the
arithmetic mean of $\bigtriangleup_{D_B}^{\rightarrow }$ and $\bigtriangleup_{D_C}^{\rightarrow }$ in which
the measurements are only performed individually on $B$ and $C$. To be explicit, the measurement for left hand side (l.h.s.) of equality
(\ref{1deficit}) is a coherent measurement on ``$BC$'' while on right hand side (r.h.s.),
local measurements on ``$B$'' and ``$C$'' are performed.
In Ref. \cite{giorgi}, a transition from satisfying the monogamy inequality to violation of monogamy inequality
is given, where the positive or negative of $\bigtriangleup_{D_A}^{\leftarrow }$ are studied.
By the definition of monogamy deficit
$\bigtriangleup_{D_A}^{\leftarrow }=D^{\leftarrow }(\rho_{A|BC})-D^{\leftarrow }(\rho_{AB})-D^{\leftarrow }(\rho_{AC})$,
it is apparent that a coherent measurement on ``$BC$'' is necessary. Here our result (\ref{1deficit}) shows that
instead of a coherent measurement, local measurements individually on ``$B$'' and ``$C$'' can be performed to find
this conclusion. We remark that local operation is much easier to be implemented than coherent measurement.
Those results reveal the hidden relationship in monogamy deficit for quantum discord
where the coherent measurement is replaced by local measurements.

In the previous paragraphs, we have already discussed the relationship
between these two monogamy deficit. Now let us consider the difference
between them. The difference between the two monogamy deficit can be expressed
as the following form
\begin{eqnarray}
E\left(\rho_{AB}\right) -\bar{\omega}_{A\mid B}^{+}{}_{\left(D\right)}=\frac{1}{2} ( \triangle_{D_{C}}^{\leftarrow}- \triangle_{D_{C}}^{\rightarrow}).
\label{deiference}
\end{eqnarray}
Where the $\bar{\omega}_{A\mid B}^{+}{}_{\left(D\right)}$ is the
average of $D^{\rightarrow}\left(\rho_{BA}\right)$ and $D^{\rightarrow}\left(\rho_{AB}\right)$.
Here we give a simple proof of the above formula.
By using the Koashi-Winter relation, we have
\begin{eqnarray*}
\triangle_{D_{B}}^{\rightarrow}=I^{\rightarrow}\left(\rho_{BA}\right)
-D^{\rightarrow}\left(\rho_{BA}\right),
\end{eqnarray*}
\begin{eqnarray*}
\triangle_{D_{A}}^{\leftarrow}=I^{\rightarrow}\left(\rho_{BA}\right)
-E\left(\rho_{AB}\right).
\end{eqnarray*}
By subtracting the above equalities, we have
\begin{eqnarray*}
\triangle_{D_{B}}^{\rightarrow} -\triangle_{D_{A}}^{\leftarrow}=E\left(\rho_{AB}\right) - D^{\rightarrow}\left(\rho_{BA}\right),
\end{eqnarray*}
by exchanging the symbols of  $A$ and $B$, we have the similar equation
\begin{eqnarray*}
\triangle_{D_{A}}^{\rightarrow} - \triangle_{D_{B}}^{\leftarrow}=E\left(\rho_{BA}\right)- D^{\rightarrow}\left(\rho_{AB}\right).
\end{eqnarray*}
Combining the above formulas, the $E\left(\rho_{AB}\right)$ - $\bar{\omega}_{A\mid B}^{+}{}_{\left(D\right)}$ can
be expressed as follows
\begin{eqnarray*}
E\left(\rho_{AB}\right) - \bar{\omega}_{A\mid B}^{+}{}_{\left(D\right)}=
\frac{1}{2} ( \triangle_{D_{A}}^{\rightarrow}+ \triangle_{D_{B}}^{\rightarrow})
- \frac{1}{2} (\triangle_{D_{A}}^{\leftarrow} + \triangle_{D_{B}}^{\leftarrow}).
\end{eqnarray*}
Substituting (\ref{1deficit}) and (\ref{relation}) into the above equation, we
have Eq.(\ref{deiference}), which completes the proof.

This formula is meaningful, it shows that the difference between
these two monogamy deficit depends on the balance of entanglement
of formation (EOF) and the average of discord. In other words, if the EOF between
$A$ and $B$ is greater than or equal to the average of discord, the first monogamy
deficit which contains measurements on $A$ and $B$ must be greater than
or equal to the second monogamy deficit which only contains local
measurement on $C$. Especially when $E\left(\rho_{AB}\right)$ is equal to the $\bar{\omega}_{A\mid B}^{+}{}_{\left(D\right)}$,
these two monogamy deficit is equivalent. In other words,
in this case, when we consider the monogamy property of quantum discord,
we only need to know one of them. Since the second monogamy deficit only contains local measurement on one party, it is easier to calculate and to be used in practical applications. Furthermore, the above formula
provides a new physical interpretation of the difference between EOF
and discord in the average sense.

As an application of the relationship between the two monogamy deficit,  for tripartite pure states, we provide an unified physical significance for these two monogamy
deficit. To see this, we first consider the equivalent expression of the second
monogamy deficit. According to the results in  \cite{FF} and \cite{XI},
we have $I^{\rightarrow}\left(\rho_{BA}\right)-D^{\rightarrow}\left(\rho_{BA}\right)=I^{\rightarrow}\left(\rho_{BC}\right)-D^{\rightarrow}\left(\rho_{BC}\right)=\widetilde{I}(\rho _{AC})-2E(\rho _{AC})$
and $\triangle_{D_{B}}^{\rightarrow}=\widetilde{I}(\rho _{AC})-2E(\rho _{AC})$.
Combing the two equations, we have that
\begin{eqnarray}
\triangle_{D_{B}}^{\rightarrow}=I^{\rightarrow}\left(\rho_{BA}\right)-D^{\rightarrow}\left(\rho_{BA}\right)=
I^{\rightarrow}\left(\rho_{BC}\right)-D^{\rightarrow}\left(\rho_{BC}\right).\label{monogam221}
\end{eqnarray}
Where $I^{\rightarrow}\left(\rho_{BA}\right)$ represents the classical
correlation, $D^{\rightarrow}\left(\rho_{BA}\right)$ represents quantum
discord. By exchanging the subscript, we have $\triangle_{D_{C}}^{\rightarrow}=I^{\rightarrow}\left(\rho_{CA}\right)-D^{\rightarrow}\left(\rho_{CA}\right)=I^{\rightarrow}\left(\rho_{CB}\right)-D^{\rightarrow}\left(\rho_{CB}\right)$.
It tells us that this monogamy deficit is equivalent to the difference
between classical correlation and quantum correlation. Since the classical
correlation can be regarded as locally accessible mutual information
(LAMI)\cite{FFF}, while the quantum correlation can be seen as locally inaccessible
mutual information (LIMI). In this sense, this monogamy deficit tells
us that how much mutual information can be extracted from a tripartite
pure state by using local measurement on one party.
To be more explicit, the monogamy inequality holds if and only if
more than half of the mutual information between $AC$ or $BC$ can be accessed  through
local measurement performed on $C$.

The above result provides a interesting relationship between the second monogamy deficit and the difference between LAMI and LIMI for arbitrary tripartite pure states. In the following, we generalize the relationship to arbitrary $N$-partite pure states. For arbitrary $N$-partite pure states $\rho_{A_1\cdots A_N}$, we have
\begin{eqnarray}
\triangle_{D_{A_{1\left(N\right)}}}^{\rightarrow}-
\triangle_{D_{A_{1\left(N-1\right)}}}^{\rightarrow}=I^{\rightarrow}\left(\rho_{A_{1}A_{N}}\right)
- D^{\rightarrow}\left(\rho_{A_{1}A_{N}}\right).
\end{eqnarray}
Where $\triangle_{D_{A_{1\left(N\right)}}}^{\rightarrow}$ represent the second monogamy deficit for arbitrary $N$-partite pure state and is given by
\begin{eqnarray*}
\triangle_{D_{A_{1\left(N\right)}}}^{\rightarrow}=D^{\rightarrow}\left(\rho_{A_{1}\mid A_{2}\cdots A_{N}}\right)
-\sum_{i=2}^{N} D^{\rightarrow}\left(\rho_{A_{1}A_{i}}\right),
\end{eqnarray*}
similarly, the second monogamy deficit $\triangle_{D_{A_{1\left(N-1\right)}}}^{\rightarrow}$ for its $(N-1)$-partite subsystem is
\begin{eqnarray*}
\triangle_{D_{A_{1\left(N-1\right)}}}^{\rightarrow} =D^{\rightarrow}\left(\rho_{A_{1}\mid A_{2}\cdots A_{N-1}}\right)
-\sum_{i=2}^{N-1} D^{\rightarrow}\left(\rho_{A_{1}A_{i}}\right).
\end{eqnarray*}
So we have
\begin{eqnarray*}
\triangle_{D_{A_{1\left(N\right)}}}^{\rightarrow}- \triangle_{D_{A_{1\left(N-1\right)}}}^{\rightarrow}&=&D^{\rightarrow}\left(\rho_{A_{1}\mid A_{2}\cdots A_{N}}\right)-D^{\rightarrow}\left(\rho_{A_{1}\mid A_{2}\cdots A_{N-1}}\right).
\notag\\
& &-
D^{\rightarrow}\left(\rho_{A_{1}A_{N}}\right).
\end{eqnarray*}
Using the Koashi-Winter relationship and considering the property of pure states, it is easy to show
\begin{eqnarray*}
D^{\rightarrow}\left(\rho_{A_{1}A_{N}}\right)
+ I^{\rightarrow}\left(\rho_{A_{1}\mid A_{2}\cdots A_{N-1}}\right)
= S\left(\rho_{A_{1}}\right).
\end{eqnarray*}
Since $S\left(\rho_{A_{1}}\right)=D^{\rightarrow}\left(\rho_{A_{1}\mid A_{2}\cdots A_{N}}\right)$ for pure state, the above results can be rewritten as follows
\begin{eqnarray*}
\triangle_{D_{A_{1\left(N\right)}}}^{\rightarrow}- \triangle_{D_{A_{1\left(N-1\right)}}}^{\rightarrow}
&=&I^{\rightarrow}\left(\rho_{A_{1}\mid A_{2}\cdots A_{N-1}}\right)
- D^{\rightarrow}\left(\rho_{A_{1}\mid A_{2}\cdots A_{N-1}}\right)\\
&=&I^{\rightarrow}\left(\rho_{A_{1}A_{N}}\right) - D^{\rightarrow}\left(\rho_{A_{1}A_{N}}\right).
\end{eqnarray*}
Which completes the proof.

This equation tells us that the difference between the second monogamy deficit
of $N$-partite system and its $(N-1)$-partite subsystem is equivalent
to the difference between classical correlation and quantum discord.
That is to say, the difference between
LAMI and  LIMI can tell
us that which system is more monogamous, the $N$-partite system or its
$(N-1)$-partite subsystem. In other words, if we can extract more than
half of the mutual information between $A_{1}$ and $A_{N}$ through
local measurements on $A_{1}$, the $N$-partite system must be more
monogamous than its $(N-1)$-partite subsystem. As we all know, in studying entanglement of a tripartite system, the monogamy deficit of entanglement can be seen as a tripartite correlation which is called tangle, or genuine entanglement. Similarly, for discord of an $N$-partite system, the monogamy deficit can also be seen as a type of multipartite correlation which beyond the usual bipartite correlations. In this sense, the $N$-partite
system contains more multipartite correlation than its $(N-1)$-partite
subsystem if and only if we can acquire at least half of the mutual
information between $A_{1}$ and $A_{N}$ through local measurements
on $A_{1}$. When $N=3$, the above result goes back to formula (\ref{monogam221}).

Now we can give a similar equivalent expression of $\triangle_{D_{C}}^{\leftarrow}$. In order to achieve this purpose,
we first define the average of classical correlation and discord.
As we all know, the discord and classical correlation are asymmetry
quantities.  By using the asymmetry of $I^{\rightarrow}\left(\rho_{BA}\right)$ and $I^{\rightarrow}\left(\rho_{AB}\right)$,
we  define  the average of classical correlation $\bar{\omega}_{A\mid B}^{+}{}_{\left(I\right)}$
and the  average discord $\bar{\omega}_{A\mid B}^{+}{}_{\left(D\right)}$ \cite{FFF},
\begin{eqnarray*}
\bar{\omega}_{A\mid B}^{+}{}_{\left(I\right)}=\frac{1}{2}(I^{\rightarrow}\left(\rho_{BA}\right)+I^{\rightarrow}\left(\rho_{AB}\right)),\nonumber\\
\bar{\omega}_{A\mid B}^{+}{}_{\left(D\right)}=\frac{1}{2}(D^{\rightarrow}\left(\rho_{BA}\right)+D^{\rightarrow}\left(\rho_{AB}\right)).
\end{eqnarray*}
From this definition, combing Eq. (\ref{1deficit}) and (\ref{monogam221}) , we have
\begin{eqnarray}
\triangle_{D_{C}}^{\leftarrow}&=&\frac{1}{2}(\triangle_{D_{A}}^{\rightarrow}+\triangle_{D_{B}}^{\rightarrow})\nonumber\\
&=&\frac{1}{2}(I^{\rightarrow}\left(\rho_{AB}\right)-D^{\rightarrow}\left(\rho_{AB}\right)+I^{\rightarrow}\left(\rho_{BA}\right)-D^{\rightarrow}\left(\rho_{BA}\right))\nonumber\\
&=&\bar{\omega}_{A\mid B}^{+}{}_{\left(I\right)}-\bar{\omega}_{A\mid B}^{+}{}_{\left(D\right)}.\label{monogam22121}
\end{eqnarray}
This formula means that the monogamy deficit
which needs a coherent measurement performed on two parties $A$ and
$B$ is equivalent to $\bar{\omega}_{A\mid B}^{+}{}_{\left(I\right)}-\bar{\omega}_{A\mid B}^{+}{}_{\left(D\right)}$,
which is the difference between the average of classical correlations
and quantum correlations where local measurements are made individually on $A$ and
$B$. In other words, according to previous view, this monogamy deficit
represents our ability to extract the mutual information by performing
local measurements on these two parties. According to the above definition,
simply we have the relation,
$\bar{\omega}_{A\mid B}^{+}{}_{\left(I\right)}+\bar{\omega}_{A\mid B}^{+}{}_{\left(D\right)}=\widetilde{I}\left(\rho_{AB}\right)$.
Which means that  the monogamy inequality holds if and only if we can acquire at least half of the mutual information   through
local measurements in the average sense.

The Eq. (\ref{monogam22121}) presented above can also be used as a criterion to check whether a given
tripartite pure state belongs to GHZ class or W class state under stochastic local operations
and classical communication (SLOCC). According to the results in Ref. \cite{MA},
we have that a tripartite pure state belongs to GHZ class state  if and only if $\triangle_{D_{C}}^{\leftarrow}\geq 0$,
otherwise it belongs to W class state. 
In other words, we can say that a tripartite pure state belongs to GHZ class state if and only if
the LAMI is always greater than or equal to the LIMI in the average
sense when local measurements are performed on $A$ and $B$. While a tripartite pure state belongs to W class state if and only if
LAMI is less than LIMI in the average sense when local measurements are performed on $A$ and $B$. In addition, a tripartite pure state
belongs to GHZ class or W class depends on whether one can acquire no less than half of the mutual information   through
local measurements in the average sense.

The above result tells us that there is a interesting relationship between the first monogamy deficit and the difference between the average of LAMI and LIMI for arbitrary tripartite pure states. In fact, we can generalize the relationship to arbitrary $N$-partite pure states. For arbitrary $N$-partite pure states $\rho_{A_1\cdots A_N}$, we have

\begin{eqnarray}
\triangle_{D_{A_{1\left(N\right)}}}^{\leftarrow}-\triangle_{D_{A_{1\left(N-1\right)}}}^{\leftarrow}
= \bar{\omega}_{\left(A_{2\cdots}A_{N-1}\right)\mid A_{N}}^{+}{}_{\left(I\right)}
- \bar{\omega}_{\left(A_{2\cdots}A_{N-1}\right)\mid A_{N}}^{+}{}_{\left(D\right)}.
\end{eqnarray}
Where $\triangle_{D_{A_{1\left(N\right)}}}^{\leftarrow}$ represent the first monogamy deficit for arbitrary $N$-partite pure state and is given by
\begin{eqnarray*}
\triangle_{D_{A_{1\left(N\right)}}}^{\leftarrow}=D^{\leftarrow}\left(\rho_{A_{1}\mid A_{2}\cdots A_{N}}\right)
- \sum_{i=2}^{N}D^{\leftarrow}\left(\rho_{A_{1}A_{i}}\right),
\end{eqnarray*}
similarly, the first monogamy deficit $\triangle_{D_{A_{1\left(N-1\right)}}}^{\leftarrow}$ for its $(N-1)$-partite subsystem is
\begin{eqnarray*}
\triangle_{D_{A_{1\left(N-1\right)}}}^{\leftarrow}=D^{\leftarrow}\left(\rho_{A_{1}\mid A_{2}\cdots A_{N-1}}\right)
- \sum_{i=2}^{N-1}D^{\leftarrow}\left(\rho_{A_{1}A_{i}}\right).
\end{eqnarray*}
So we have
\begin{eqnarray*}
\triangle_{D_{A_{1\left(N\right)}}}^{\leftarrow}-\triangle_{D_{A_{1\left(N-1\right)}}}^{\leftarrow}
&=&D^{\leftarrow}\left(\rho_{A_{1}\mid A_{2}\cdots A_{N}}\right)
-
D^{\leftarrow}\left(\rho_{A_{1}\mid A_{2}\cdots A_{N-1}}\right)
\notag\\
& &-
D^{\leftarrow}\left(\rho_{A_{1}A_{N}}\right).
\end{eqnarray*}
Let's regard the $N$-partite pure state $\rho_{A_{1}\cdots A_{N}}$ as the tripartite pure state $\rho_{A_{1}\left(A_{2}\cdots A_{N-1}\right)A_{N}}$.
In this sense, the right-hand side of the above formula can be considered
to be an first monogamy deficit for this equivalent tripartite pure
state. By using the previous equation (\ref{monogam22121}), the above
formula can be expressed as follows
\begin{eqnarray*}
\triangle_{D_{A_{1\left(N\right)}}}^{\leftarrow}-\triangle_{D_{A_{1\left(N-1\right)}}}^{\leftarrow}
= \bar{\omega}_{\left(A_{2\cdots}A_{N-1}\right)\mid A_{N}}^{+}{}_{\left(I\right)}
- \bar{\omega}_{\left(A_{2\cdots}A_{N-1}\right)\mid A_{N}}^{+}{}_{\left(D\right)}.
\end{eqnarray*}
Which completes the proof.

This equation provides that the difference between the monogamy deficit of $N$-partite
system and its $(N-1)$-partite subsystem is equivalent to the difference
between the average of classical correlations and quantum correlations.
That is to say, the right hand side of this equation can tell us that which system is more monogamous, the $N$-partite
system or its $(N-1)$-partite subsystem. Then we can say
that the $N$-partite system must be more monogamous than its $(N-1)$-partite
subsystem if and only if the LAMI is always greater than or equal
to the LIMI in the average sense.  Additionally, similar as the explanation of the formula (10), the $N$-partite
system contains more multipartite correlation than its $(N-1)$-partite
subsystem if and only if we can acquire at least half of the mutual
information through measurements performed on $\left(A_{2\cdots}A_{N-1}\right)$
and $A_{N}$ in the average sense. When $N=3$, the above result returns
to  formula  (\ref{monogam22121}).

As a short summary for previous discussion, for tripartite pure state, we demonstrate a relation which connects two types of monogamy inequalities
for quantum discord and provide the difference between them. By using this relation, we get an unified view for these two monogamy inequalities. That is, for arbitrary tripartite pure states, both of the monogamy
inequalities hold only if one can extract more than half of mutual information by using local measurement.

\subsection{Monogamy deficit for discord and other quantum correlations}\label{squash}

The squashed entanglement is an entanglement monotone for bipartite quantum
states introduced by Christandl and Winter \cite{FE}. For bipartite state $\rho_{AB}$,
the squashed entanglement is given by
\begin{eqnarray*}
E_{sq}(\rho_{AB})\equiv \frac{1}{2}inf\widetilde{I}(\rho_{A:B\mid C}),
\end{eqnarray*}
here $\widetilde{I}(\rho_{A:B\mid C})$ is the conditional mutual information of  $\rho_{ABC}$ with respect to particle $C$ (see Eq. (\ref{conditionm})), $\rho_{ABC}$ is the extension of $\rho_{AB}$  and the infimum is taken over all extensions of $\rho_{AB}$ such that $\rho _{AB}={\rm Tr}_C\rho _{ABC}$.
The squashed entanglement has many important properties \cite{KO}. For example,
\emph{(1) The squashed entanglement is upper bounded by entanglement of formation.
(2) For any tripartite state $\rho_{ABC}$, we have $E_{sq}(\rho_{AB})+E_{sq}(\rho_{AC})\leq E_{sq}(\rho_{A(BC)})$.}

By using the property (2), we can generalize the concept of monogamy deficit for squashed entanglement.
We define the monogamy deficit for squashed entanglement  as
\begin{eqnarray}
\triangle_{(E_{sq(C)})}=E_{sq}(\rho_{C(AB)})-E_{sq}(\rho_{CA})-E_{sq}(\rho_{CB}).
\label{squuud}
\end{eqnarray}
Similarly, one can also define the monogamy deficit $\triangle_{E_C}=E(\rho_{C(AB)})-E(\rho_{CA})-E(\rho_{CB})$  for the entanglement of formation.

For the monogamy deficit for squashed entanglement, we can have the following result.
For any tripartite pure state $\rho_{ABC}$, we may observe,
 \begin{eqnarray}
\triangle_{(E_{sq(C)})}\geq\max \{   \triangle_{D_{C}}^{\leftarrow},0\}=\max\left\{ \bar{\omega}_{A\mid B}^{+}{}_{\left(I\right)}-\bar{\omega}_{A\mid B}^{+}{}_{\left(D\right)},0\right\}.
\label{relati}
\end{eqnarray}
The proof of this relation is presented below.
By the above property (2) of monogamy deficit for squashed entanglement, we have $\triangle_{(E_{sq(C)})}\geq 0$, and the second equality
is given by Eq.(\ref{monogam22121}). We now only need to prove $\triangle_{(E_{sq(C)})}\geq \triangle_{E_{C}}$  and $\triangle_{E_{C}}=\triangle_{D_{C}}^{\leftarrow}$. For tripartite pure state, $E_{sq}(\rho_{C(AB)})=E(\rho_{C(AB)})$, thus we have
\begin{eqnarray*}
\triangle_{(E_{sq(C)})}&=& E_{sq}(\rho_{C(AB)})-E_{sq}(\rho_{CA})-E_{sq}(\rho_{CB})\nonumber\\
&=& E(\rho_{C(AB)})-E_{sq}(\rho_{CA})-E_{sq}(\rho_{CB})\nonumber\\
&\geq & E(\rho_{C(AB)})-E(\rho_{CA})-E(\rho_{CB}).\nonumber\\
&=& \triangle_{E_{C}}.
\end{eqnarray*}
For tripartite pure state,  we have $E(\rho_{C(AB)})=D^{\leftarrow }(\rho_{C|AB})=S(\rho_{C})$,
combing $E(\rho_{CA})+E(\rho_{CB})=D^{\leftarrow}\left(\rho_{CA}\right)+D^{\leftarrow}\left(\rho_{CB}\right)$ in
Ref. \cite{FF1}, thus we have $\triangle_{E_{C}}=\triangle_{D_{C}}^{\leftarrow}$. That is
$\triangle_{(E_{sq(C)})}\geq \triangle_{E_{C}}=\triangle_{D_{C}}^{\leftarrow}$.
Now we know that Eq. (\ref{relati}) is true.

The quantum work deficit is an important information-
theoretic measure of quantum correlation introduced by Oppenheim \emph{et al.} \cite{JO}.
For an arbitrary bipartite state $\rho_{AB}$, the
quantum work-deficit is defined as
\begin{eqnarray}
\Delta(\rho_{AB})=I_G(\rho_{AB})-I_L(\rho_{AB}),
\label{workdefi}
\end{eqnarray}
where $I_G(\rho_{AB})$ represents the thermodynamic ``work'' that can be extracted from $\rho_{AB}$
by ``closed global operations'', $I_L(\rho_{AB})$ represents the thermodynamic ``work'' that can be extracted from $\rho_{AB}$
by closed local operation and classical communication (CLOCC) \cite{key-12}.
Further more, the one side work deficit $\Delta^\rightarrow(\rho_{AB})$ ($\Delta^\leftarrow(\rho_{AB})$)
means that CLOCC is restricted on projection measurements at one particle $A$ ($B$). According to Ref.\cite{key-12}, the one side
work deficit is lower bounded by quantum discord, that is
\begin{eqnarray*}
D^\rightarrow(\rho_{AB})\leq \Delta^\rightarrow(\rho_{AB}),\nonumber\\
D^\leftarrow(\rho_{AB}) \leq \Delta^\leftarrow(\rho_{AB}).
\end{eqnarray*}
Similar to quantum discord and squashed entanglement, we provide the definition of the monogamy deficit for work deficit.
The two kinds of monogamy deficit for work deficit are given as,
\begin{eqnarray}
\bigtriangleup_{\Delta_A}^{\leftarrow }=\Delta^{\leftarrow }(\rho_{A:BC})-\Delta^{\leftarrow }(\rho_{AB})-\Delta^{\leftarrow }(\rho_{AC}),
\label{monogam11}
\end{eqnarray}
\begin{eqnarray}
\bigtriangleup_{\Delta_A}^{\rightarrow }=\Delta^{\rightarrow }(\rho_{A:BC})-\Delta^{\rightarrow }(\rho_{AB})-\Delta^{\rightarrow }(\rho_{AC}).
\label{monogamyde2}
\end{eqnarray}
The first  definition involves a POVM coherently performed on $B$ and $C$ together,
and the other involve measurements
only on $A$.

For the monogamy deficit for work deficit, we present the following observations.
For any tripartite pure state $\rho_{ABC}$, we have that
 \begin{eqnarray}
& &\bigtriangleup_{\Delta_A}^{\leftarrow }\leq \bigtriangleup_{D_A}^{\leftarrow }\leq \triangle_{(E_{sq(A)})},
\label{relati12}\\
& & \bigtriangleup_{\Delta_A}^{\rightarrow }\leq \bigtriangleup_{D_A}^{\rightarrow }.
\label{relati123}
\end{eqnarray}
The correctness of these observations are presented below.
The inequality $\bigtriangleup_{D_A}^{\leftarrow }\leq \triangle_{(E_{sq(A)})}$ is from (\ref{relati}). For tripartite pure state $\rho_{ABC}$, we have $\Delta^{\leftarrow }(\rho_{A:BC})=D^{\leftarrow }(\rho_{A|BC})=S(\rho_A)$,
$D^{\leftarrow }(\rho_{AB})\leq \Delta^{\leftarrow }(\rho_{AB})$, $D^{\leftarrow }(\rho_{AC})\leq \Delta^{\leftarrow }(\rho_{AC})$\cite{key-12}.
Which implies that $\bigtriangleup_{\Delta_A}^{\leftarrow }\leq \bigtriangleup_{D_A}^{\leftarrow }$. Similarly, we can prove (\ref{relati123}).

The physical interpretation of Eq. (\ref{relati12}) can be like the following.
The monogamy property for work deficit implies the monogamy property for quantum discord, entanglement of formation and squashed entanglement for any pure state. In this case, we have $\bigtriangleup_{\Delta_A}^{\leftarrow }\leq  \bar{\omega}_{B\mid C}^{+}{}_{\left(I\right)}-\bar{\omega}_{B\mid C}^{+}{}_{\left(D\right)}  \leq \triangle_{(E_{sq(A)})} $. In addition, we can extract more than half of the mutual
information between $BC$ in the average sense through local measurements.
Combing (\ref{monogam221}) and (\ref{relati123}), we have $\bigtriangleup_{\Delta_A}^{\rightarrow }\geq 0$ which implies one can extract at least half of
the mutual information between $AB$ or $AC$ by using local measurement of $A$.

\section{Necessary and sufficient criteria for non-negative monogamy deficit $\bigtriangleup_{D_A}^\rightarrow$}\label{two}
As is shown in \cite{MA}, a  necessary and sufficient condition for discord  to be monogamous is
$\bigtriangleup_{D_A}^{\leftarrow }\ge 0$, see Eq. (\ref{monogam1}) for definition.
Since we present two kinds of monogamy deficit,
an interesting question is what does it means if the measurement
is taken on another side,
 $\bigtriangleup_{D_A}^\rightarrow \ge 0$? In this section,
we will consider this question and prove a similar necessary and sufficient condition for the second kinds of monogamy inequality.
 We first present some definitions about mutual
information, conditional mutual information with respect to a single particle $A$.

For a tripartite state $\rho_{ABC}$, the unmeasured conditional mutual information with respect to particle $A$
is given as,
\begin{eqnarray}
\widetilde{I}_A(\rho_{B:C|A})\equiv \widetilde{S}(\rho_{B|A})+\widetilde{S}(\rho_{C|A})-\widetilde{S}(\rho_{BC|A}),
\label{conditionm}
\end{eqnarray}
and the interrogated conditional mutual information with respect to particle $A$ is,
\begin{eqnarray}
I_A(\rho_{B:C|A})\equiv S(\rho_{B|A})+S(\rho_{C|A})-S(\rho_{BC|A}).
\end{eqnarray}
By the strong subadditivity of von Neumann entropy,
we know, $\widetilde{I}_A(\rho_{B:C|A})\ge 0$, $I_A(\rho_{B:C|A})\ge 0$, both are non-negative.

We next propose the concept of interaction information. The (unmeasured)
interaction information $\widetilde{I}_A(\rho_{ABC})=\widetilde{I}_A(\rho_{B:C|A})-\widetilde{I}(\rho_{BC})$.
By simple calculating, one may observe that $\widetilde{I}_A(\rho_{ABC})=S(\rho_{AB})+S(\rho_{AC})+S(\rho_{BC})
-(S(\rho_{A})+S(\rho_{B})+S(\rho_{C}))-S(\rho_{ABC})= \widetilde{I}(\rho_{ABC})$,
this is the  interaction information
defined in Ref. \cite{MA}.

 For the state $\rho_{ABC}$ and a given measurement $\{\mathcal{M}_i^A\}$, an
interrogated interaction information with respect to $A$ is given as,
\begin{eqnarray}
I_A(\rho_{ABC})_{\{\mathcal{M}_i^A\}}\equiv I_A(\rho_{B:C|A})-I_A(\rho_{BC})_{\{\mathcal{M}_i^A\}}.
\label{interaction-inf}
\end{eqnarray}
Since $I_A(\rho_{BC})_{\{\mathcal{M}_i^A\}}$ do not have particle $A$, we have $I_A(\rho_{BC})_{\{\mathcal{M}_i^A\}}=\widetilde{I}(\rho_{BC})$,
which does not involve any measurement.
Given a tripartite quantum state $\rho_{ABC}$, $I_A(\rho_{ABC})_{\{\mathcal{M}_i^A\}}$ represents the interaction information with respect to $A$, which
is defined in (\ref{interaction-inf}). For this definition (\ref{interaction-inf}), the first term of the right hand side
is the conditional mutual information of $B$, $C$ when $A$ is present and measured,
the second term is the mutual information of $BC$ where $A$ is absent.
Here, $I_A(\rho_{ABC})_{\{\mathcal{M}_i^A\}}$ measures the effect on the amount of correlation shared between $B$ and $C$ by measuring $A$.
A positive interaction information with respect to $A$ means the presentation of $A$ can enhance the total correlation
between $B$ and $C$, while negative interaction information with respect to $A$ means the presentation of $A$ inhibits the total correlation
between $B$ and $C$. $I_A(\rho_{ABC})_{\{\mathcal{M}_i^A\}}$ has the similar
property as $I(\rho_{ABC})$ proposed in Ref. \cite{MA} and can be read as a necessary and sufficient criteria for
a monogamy inequality. We next have the following theorem.

\begin{thm}
For any state $\rho_{ABC}$, $D^\rightarrow(\rho_{AB})+D^\rightarrow(\rho_{AC})\leq D^\rightarrow(\rho_{A|BC})$ if and only if
the interrogated interaction information with respect to $A$ is less than or equal to the unmeasured interaction information
with respect to $A$.

\end{thm}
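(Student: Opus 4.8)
The plan is to reduce the stated equivalence to a single algebraic identity, after which the ``if and only if'' is immediate. The only quantity whose sign is in question is the second monogamy deficit $\bigtriangleup_{D_A}^{\rightarrow}=D^{\rightarrow}(\rho_{A|BC})-D^{\rightarrow}(\rho_{AB})-D^{\rightarrow}(\rho_{AC})$, and the whole theorem will follow from
\begin{eqnarray*}
\bigtriangleup_{D_A}^{\rightarrow}=\widetilde{I}_A(\rho_{ABC})-I_A(\rho_{ABC})_{\{\mathcal{M}_i^A\}}.
\end{eqnarray*}
Since $\widetilde{I}_A(\rho_{ABC})$ is a fixed, measurement-independent quantity, this identity makes $\bigtriangleup_{D_A}^{\rightarrow}\ge 0$ equivalent to $I_A(\rho_{ABC})_{\{\mathcal{M}_i^A\}}\le \widetilde{I}_A(\rho_{ABC})$, which is exactly the claimed criterion. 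So the real work is to verify this one equation.

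First I would expand each discord with the definition in Eq.~(\ref{discord}), writing $D^{\rightarrow}(\rho_{A|BC})=S(\rho_A)-S(\rho_{ABC})+S(\rho_{BC|A})$, $D^{\rightarrow}(\rho_{AB})=S(\rho_A)-S(\rho_{AB})+S(\rho_{B|A})$, and $D^{\rightarrow}(\rho_{AC})=S(\rho_A)-S(\rho_{AC})+S(\rho_{C|A})$, where each $S(\rho_{X|A})$ is the optimal post-measurement conditional entropy. Subtracting yields
\begin{eqnarray*}
\bigtriangleup_{D_A}^{\rightarrow}=S(\rho_{AB})+S(\rho_{AC})-S(\rho_A)-S(\rho_{ABC})+S(\rho_{BC|A})-S(\rho_{B|A})-S(\rho_{C|A}).
\end{eqnarray*}
In parallel I would unfold the two interaction informations. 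For the unmeasured one I take the simplified form already recorded just before the theorem, $\widetilde{I}_A(\rho_{ABC})=S(\rho_{AB})+S(\rho_{AC})+S(\rho_{BC})-S(\rho_A)-S(\rho_B)-S(\rho_C)-S(\rho_{ABC})$. For the interrogated one I use Eq.~(\ref{interaction-inf}) together with $I_A(\rho_{BC})_{\{\mathcal{M}_i^A\}}=\widetilde{I}(\rho_{BC})$, giving $I_A(\rho_{ABC})_{\{\mathcal{M}_i^A\}}=S(\rho_{B|A})+S(\rho_{C|A})-S(\rho_{BC|A})-S(\rho_B)-S(\rho_C)+S(\rho_{BC})$. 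Forming the difference, the block $S(\rho_B)+S(\rho_C)-S(\rho_{BC})$ cancels and what remains matches the boxed expression for $\bigtriangleup_{D_A}^{\rightarrow}$ term by term, which establishes the identity and hence the equivalence.

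The main point demanding care is the bookkeeping of the optimizing measurements, and this is where the argument could silently break. Each of $S(\rho_{B|A})$, $S(\rho_{C|A})$, $S(\rho_{BC|A})$ must be read as the very quantity that appears in the corresponding discord, i.e.\ \emph{separately} optimized over measurements on $A$, and $I_A(\rho_{ABC})_{\{\mathcal{M}_i^A\}}$ must be interpreted with exactly those same separately optimal conditional entropies. It is precisely this reading that turns the relation into an equality and yields a genuine two-sided criterion. If one instead insisted on a single common measurement on $A$ in all three conditional entropies, then optimality of $S(\rho_{B|A})$ and $S(\rho_{C|A})$ would only furnish an inequality $\widetilde{I}_A-I_A\le\bigtriangleup_{D_A}^{\rightarrow}$, so that $I_A\le\widetilde{I}_A$ would still force $\bigtriangleup_{D_A}^{\rightarrow}\ge 0$ but the converse would fail. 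Once the measurement conventions are fixed consistently, the theorem needs nothing beyond the definitions already assembled above, with no further appeal to strong subadditivity or the Koashi--Winter relation.
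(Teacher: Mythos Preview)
Your proposal is correct and follows essentially the same route as the paper: both establish the identity $\bigtriangleup_{D_A}^{\rightarrow}=\widetilde{I}_A(\rho_{ABC})-I_A(\rho_{ABC})_{\{\mathcal{M}_i^A\}}$ by direct expansion of the discords and the two interaction informations, after which the biconditional is immediate. Your explicit discussion of how the three post-measurement conditional entropies must each be read as \emph{separately} optimized is a worthwhile clarification that the paper leaves implicit.
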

\begin{proof}
We only need to calculate the monogamy deficit $\bigtriangleup_{D_A}^\rightarrow$,
\begin{eqnarray}
\bigtriangleup_{D_A}^{\rightarrow } &=&D^{\rightarrow }(\rho _{A|BC})-D^{\rightarrow }(\rho _{AB})-D^{\rightarrow }(\rho _{AC})
\notag \\
&=&S(\rho _{BC|A})-\widetilde{S}(\rho _{BC|A})-(S(\rho _{B|A})-\widetilde{S}(\rho _{B|A}))
\notag \\
& &-(S(\rho _{C|A})-\widetilde{S}(\rho _{C|A}))
\notag \\
&=&S(\rho _{BC|A})-S(\rho _{B|A})-S(\rho _{C|A})+I(\rho _{BC})
\notag \\
& &-(\widetilde{S}(\rho _{BC|A})-\widetilde{S}(\rho _{B|A})-\widetilde{S}(\rho _{C|A})+\widetilde{I}(\rho _{BC}))
\notag \\
&=&\widetilde{I}_A(\rho_{ABC})-I_A(\rho_{ABC})_{\{\mathcal{M}_i^A\}}.
\label{disefi}
\end{eqnarray}
From (\ref{disefi}), we have $\bigtriangleup_{D_A}^{\rightarrow }\geq 0$ if and only if $\widetilde{I}_A(\rho_{ABC})\geq I_A(\rho_{ABC})_{\{\mathcal{M}_i^A\}}$
which completes the proof.
\end{proof}

For pure state, we have $\widetilde{I}_A(\rho_{ABC})=0$, and the monogamy deficit of quantum discord is equivalent to the non-positivity of the interrogated
information with respect to $A$.

To see a transition from violation to observation of monogamy, we
consider a family of states \cite{giorgi},
\begin{eqnarray}
|\widetilde{\psi}\left(p,\varepsilon\right)\rangle&=&\sqrt{p\varepsilon}|000\rangle+\sqrt{p(1-\varepsilon)}|111\rangle
\nonumber \\
&&+\sqrt{\frac{1-p}{2}}(|101\rangle+|110\rangle).
\end{eqnarray}
Note that $|\widetilde{\psi}(\frac{1}{3},1)\rangle$ is the
 maximally
entangled W state $\sqrt{\frac{1}{3}}(|000\rangle+|101\rangle+|110\rangle)$,
while $|\widetilde{\psi}(1,\frac{1}{2})\rangle$ is the GHZ state, $\sqrt{\frac{1}{2}}(|000\rangle+|111\rangle)$.
In Fig.1, $\triangle_{D_{A}}^{\rightarrow}=D^{\rightarrow}(\rho_{A|BC})-D^{\rightarrow}(\rho_{AB})-D^{\rightarrow}(\rho_{AC})$
is plotted as a function of $p$ for different values of $\varepsilon$.
From this figure, we can show that the interrogated
information with respect to $A$ can be positive or negative for tripartite pure state. The $I_A(\rho_{ABC})_{\{\mathcal{M}_i^A\}}$ is increasing with the increasing of $\varepsilon$. All of the three lines are very close to each other when $I_A(\rho_{ABC})_{\{\mathcal{M}_i^A\}}$ is positive and the critical point from positive to negative is almost identical for them.	
Especially for the W state, when p approaches to 1, the $I_A(\rho_{ABC})_{\{\mathcal{M}_i^A\}}$ approaches to zero.

\begin{figure}\centerline{
\includegraphics[width=8cm]{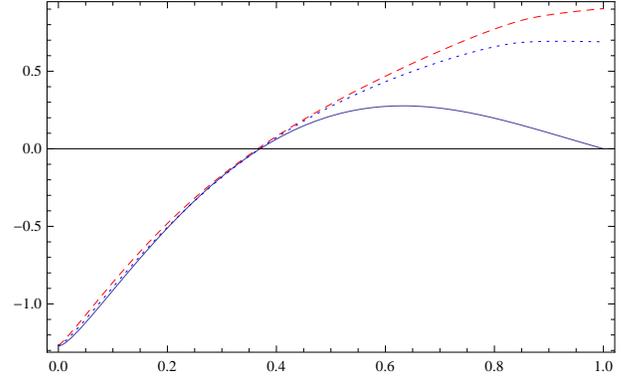}}
\caption{ (Color online) the monogamy deficit $\triangle_{D_{A}}^{\rightarrow}$ for
$|\widetilde{\psi}(p,\varepsilon)\rangle$, $\triangle_{D_{A}}^{\rightarrow}$ as
a function of $p$ for different values of $\varepsilon$ (see the main
text). States are monogamy when the respective curves are positive.
Red dashed line is for $\varepsilon=0.5$, blue dotted line is for $\epsilon=0.75$,
blue solid line is for $\varepsilon=1$. The monogamy deficit is decreasing with the
increasing of $\varepsilon$. All of the three lines are very close to each
other when $\triangle_{D_{A}}^{\rightarrow}$ is polygamy and the critical
point from polygamy to monogamy is almost identical for them. When $\varepsilon\rightarrow1$, $|\widetilde{\psi}(p,\varepsilon)\rangle$
 approach to W states, in this case, when $p\rightarrow1$, $\triangle_{D_{A}}^{\rightarrow}\rightarrow0$. The blue solid line is increased
first and then decreased.
}\label{lct}
\end{figure}

As an application of the necessary and sufficient conditions, we find
an interesting equivalent expression of entanglement of formation for tripartite pure states.

For tripartite pure states,
it is shown that \cite{XI}, $\triangle_{D_{A}}^{\rightarrow}=\widetilde I(\rho_{BC})-2E(\rho_{BC})$.
From the previous discussion, we have the formula,
$\triangle_{D_{A}}^{\rightarrow}=\widetilde{I_{A}}\left(\rho_{B:C\mid A}\right)-I_{A}\left(\rho_{B:C\mid A}\right)$,
which holds for general tripartite mixed states. When we consider the case
of pure states, the above two expressions should be equal. That is
to say, in this case, $\widetilde I(\rho_{BC})-2E(\rho_{BC})=\widetilde{I_{A}}\left(\rho_{B:C\mid A}\right)-I_{A}\left(\rho_{B:C\mid A}\right)$.
At the same time, it is easy to show that $\widetilde{I_{A}}\left(\rho_{B:C\mid A}\right)=S\left(\rho_{BA}\right)+S\left(\rho_{CA}\right)-S\left(\rho_{BCA}\right)-S\left(\rho_{A}\right)=S\left(\rho_{C}\right)+S\left(\rho_{B}\right)-S\left(\rho_{BC}\right)=\widetilde I(\rho_{BC})$.
So we have $E(\rho_{BC})=\frac{1}{2}I_{A}\left(\rho_{B:C\mid A}\right)$.
Thus the interrogated conditional mutual information with respect to $A$
is twice of the entanglement of formation for state $\rho _{BC}$.

\section{Necessary and sufficient criteria for non-negative monogamy deficit for multipartite system}\label{necess}

In this section, we generalize our result to multipartite system.
 We give a necessary and sufficient condition for $\triangle_{D_{A_{1\left(N\right)}}}^{\rightarrow}\ge 0$, where the monogamy
 deficit is defined for multipartite state, $\triangle_{D_{A_{1\left(N\right)}}}^{\rightarrow}=D^{\rightarrow}\left(\rho_{A_{1}\mid A_{2}\cdots A_{N}}\right)-D^{\rightarrow}\left(\rho_{A_{1}A_{2}}\right)-D^{\rightarrow}\left(\rho_{A_{1}A_{3}}\right)-\cdots-D^{\rightarrow}\left(\rho_{A_{1}A_{N}}\right)$.
In order to consider this question, similar as the tripartite state, we next present some definitions
about mutual information, conditional mutual information with respect to
a single particle $A_{1}$.

For a $N$-partite  state $\rho_{A_{1}\cdots A_{N}}$,   the unmeasured conditional
mutual information with respect to particle $A_{1}$ is given as $\widetilde{I_{A_{1}}}\left(\rho_{A_{K}:\left(A_{K+1}\cdots A_{N}\right)\mid A_{1}}\right)=\widetilde{S}\left(\rho_{A_{K}\mid A_{1}}\right)+\widetilde{S}\left(\rho_{\left(A_{K+1}\cdots A_{N}\right)\mid A_{1}}\right)-\widetilde{S}\left(\rho_{A_{K}\left(A_{K+1}\cdots A_{N}\right)\mid A_{1}}\right)$, where $K=2,\ldots, N-1$.
The interrogated conditional mutual information with respect to particle
$A_{1}$ is $I_{A_{1}}\left(\rho_{A_{K}:\left(A_{K+1}\cdots A_{N}\right)\mid A_{1}}\right)=S\left(\rho_{A_{K}\mid A_{1}}\right)+S\left(\rho_{\left(A_{K+1}\cdots A_{N}\right)\mid A_{1}}\right)-S\left(\rho_{A_{K}\left(A_{K+1}\cdots A_{N}\right)\mid A_{1}}\right)$.
By the strong subadditivity of von Neumann entropy, we have $\widetilde{I_{A_{1}}}\left(\rho_{A_{K}:\left(A_{K+1}\cdots A_{N}\right)\mid A_{1}}\right)$
and $I_{A_{1}}\left(\rho_{A_{K}:\left(A_{K+1}\cdots A_{N}\right)\mid A_{1}}\right)$
are both non-negative.

We define the concept of interaction information with respect to $A_{1}$.
The (unmeasured) interaction information is defined as, $\widetilde{I_{A_{1}}}\left(\rho_{A_{1}A_{K}\left(A_{K+1}\cdots A_{N}\right)}\right)=\widetilde{I_{A_{1}}}\left(\rho_{A_{K}:\left(A_{K+1}\cdots A_{N}\right)\mid A_{1}}\right)-\widetilde{I}\left(\rho_{A_{K}\left(A_{K+1}\cdots A_{N}\right)}\right)$.
For the state $\rho_{A_{1}A_{K}\left(A_{K+1}\cdots A_{N}\right)}$ and
a given measurement $\left\{ M_{i}^{A_{1}}\right\} $, an interrogated
interaction information with respect to $A_{1}$ is given as $I_{A_{1}}\left(\rho_{A_{1}A_{K}\left(A_{K+1}\cdots A_{N}\right)}\right)_{\left\{ M_{i}^{A_{1}}\right\} }=I_{A_{1}}\left(\rho_{A_{K}:\left(A_{K+1}\cdots A_{N}\right)\mid A_{1}}\right)-I_{A_{1}}\left(\rho_{A_{K}\left(A_{K+1}\cdots A_{N}\right)}\right)_{\left\{ M_{i}^{A_{1}}\right\} }$,
where the suffix $I_{A_{1}}\left(\rho_{A_{1}A_{K}\left(A_{K+1}\cdots A_{N}\right)}\right)_{\left\{ M_{i}^{A_{1}}\right\} }$ is
used to indicate the measurements on $A_{1}$.

Similar as the above calculating, we find that $\widetilde{I_{A_{1}}}\left(\rho_{A_{1}A_{K}\left(A_{K+1}\cdots A_{N}\right)}\right)=S\left(\rho_{A_{K}\left(A_{K+1}\cdots A_{N}\right)}\right)+S\left(\rho_{A_{K}A_{1}}\right)+S\left(\rho_{\left(A_{K+1}\cdots A_{N}\right)A_{1}}\right)-(S\left(\rho_{A_{K}}\right)+S\left(\rho_{\left(A_{K+1}\cdots A_{N}\right)}\right)+S\left(\rho_{A_{1}}\right))-S\left(\rho_{A_{K}\left(A_{K+1}\cdots A_{N}\right)A_{1}}\right)=\widetilde{I}\left(\rho_{A_{1}A_{K}\left(A_{K+1}\cdots A_{N}\right)}\right)$,
which is the interaction information we have defined. Since $I_{A_{1}}\left(\rho_{A_{K}\left(A_{K+1}\cdots A_{N}\right)}\right)_{\left\{ M_{i}^{A_{1}}\right\} }$ do
not have particle $A_{1}$, we have $I_{A_{1}}\left(\rho_{A_{K}\left(A_{K+1}\cdots A_{N}\right)}\right)_{\left\{ M_{i}^{A_{1}}\right\} }=\widetilde I\left(\rho_{A_{K}\left(A_{K+1}\cdots A_{N}\right)}\right)$,
which also does not involve any measurement as in tripartite case.

Now we can give the necessary and sufficient condition for $\triangle_{D_{A_{1\left(N\right)}}}^{\rightarrow}$ is
no less than zero. We have the following theorem.

\begin{thm}

For any $\rho_{A_{1}\cdots A_{N}}, \triangle_{D_{A_{1\left(N\right)}}}^{\rightarrow}\geq0$ if
and only if the interrogated interaction information with respect to
$A_{1}$ being less than or equal to the unmeasured interaction information
with respect to $A_{1}$.
\end{thm}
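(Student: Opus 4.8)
The plan is to imitate the proof of Theorem 1, computing the multipartite deficit $\triangle_{D_{A_{1(N)}}}^{\rightarrow}$ directly and then recognising the result as a (summed) difference of interaction informations. First I would expand every discord in the deficit through the ``measured minus unmeasured conditional entropy'' form already used in Eq.~(\ref{disefi}), writing $D^{\rightarrow}(\rho_{A_1\mid A_2\cdots A_N})=S(\rho_{(A_2\cdots A_N)\mid A_1})-\widetilde{S}(\rho_{(A_2\cdots A_N)\mid A_1})$ and $D^{\rightarrow}(\rho_{A_1A_i})=S(\rho_{A_i\mid A_1})-\widetilde{S}(\rho_{A_i\mid A_1})$ for each $i=2,\dots,N$, where each $S(\rho_{\cdot\mid A_1})$ denotes the measured conditional entropy for the measurement on $A_1$ that is optimal for the corresponding reduced state. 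Substituting into $\triangle_{D_{A_{1(N)}}}^{\rightarrow}=D^{\rightarrow}(\rho_{A_1\mid A_2\cdots A_N})-\sum_{i=2}^{N}D^{\rightarrow}(\rho_{A_1A_i})$ and regrouping separates the deficit into a purely measured bracket $\bigl[S(\rho_{(A_2\cdots A_N)\mid A_1})-\sum_{i=2}^{N}S(\rho_{A_i\mid A_1})\bigr]$ minus the analogous unmeasured (tilde) bracket.

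The heart of the argument is a telescoping identity applied to each bracket. Writing $T_K=S(\rho_{(A_K\cdots A_N)\mid A_1})$, the sum $\sum_{K=2}^{N-1}I_{A_1}(\rho_{A_K:(A_{K+1}\cdots A_N)\mid A_1})=\sum_{K=2}^{N-1}\bigl[S(\rho_{A_K\mid A_1})+T_{K+1}-T_K\bigr]$ collapses, since the $T_K$ terms cancel pairwise, to exactly $\sum_{i=2}^{N}S(\rho_{A_i\mid A_1})-S(\rho_{(A_2\cdots A_N)\mid A_1})$. Hence the measured bracket equals $-\sum_{K=2}^{N-1}I_{A_1}(\rho_{A_K:(A_{K+1}\cdots A_N)\mid A_1})$, and the identical computation on the entropies $\widetilde{S}$ shows the unmeasured bracket equals $-\sum_{K=2}^{N-1}\widetilde{I_{A_1}}(\rho_{A_K:(A_{K+1}\cdots A_N)\mid A_1})$. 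Subtracting the two brackets therefore yields
\begin{eqnarray*}
\triangle_{D_{A_{1(N)}}}^{\rightarrow}=\sum_{K=2}^{N-1}\Bigl[\widetilde{I_{A_1}}\bigl(\rho_{A_K:(A_{K+1}\cdots A_N)\mid A_1}\bigr)-I_{A_1}\bigl(\rho_{A_K:(A_{K+1}\cdots A_N)\mid A_1}\bigr)\Bigr].
\end{eqnarray*}

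To finish I would pass from the conditional mutual informations to the interaction informations. Using the definitions $\widetilde{I_{A_1}}(\rho_{A_1A_K(A_{K+1}\cdots A_N)})=\widetilde{I_{A_1}}(\rho_{A_K:(A_{K+1}\cdots A_N)\mid A_1})-\widetilde{I}(\rho_{A_K(A_{K+1}\cdots A_N)})$ and $I_{A_1}(\rho_{A_1A_K(A_{K+1}\cdots A_N)})_{\{M_i^{A_1}\}}=I_{A_1}(\rho_{A_K:(A_{K+1}\cdots A_N)\mid A_1})-I_{A_1}(\rho_{A_K(A_{K+1}\cdots A_N)})_{\{M_i^{A_1}\}}$, and the fact that $I_{A_1}(\rho_{A_K(A_{K+1}\cdots A_N)})_{\{M_i^{A_1}\}}=\widetilde{I}(\rho_{A_K(A_{K+1}\cdots A_N)})$ because this term carries no particle $A_1$ (the exact analogue of $I_A(\rho_{BC})_{\{\mathcal{M}_i^A\}}=\widetilde{I}(\rho_{BC})$ in the tripartite case), the two $\widetilde{I}(\rho_{A_K(\cdots)})$ contributions cancel in each summand. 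Consequently $\triangle_{D_{A_{1(N)}}}^{\rightarrow}=\sum_{K}\bigl[\widetilde{I_{A_1}}(\rho_{A_1A_K(A_{K+1}\cdots A_N)})-I_{A_1}(\rho_{A_1A_K(A_{K+1}\cdots A_N)})_{\{M_i^{A_1}\}}\bigr]$, so $\triangle_{D_{A_{1(N)}}}^{\rightarrow}\ge 0$ holds precisely when the total interrogated interaction information with respect to $A_1$ does not exceed the total unmeasured one, which is the claim. I expect the only delicate point to be the telescoping/cancellation bookkeeping together with the convention that the several $S(\rho_{\cdot\mid A_1})$ are separately optimised measured entropies; as in Theorem 1, this looseness is absorbed into the definitions of $I_{A_1}$, and once the measurement-independent $\widetilde{I}(\rho_{A_K(\cdots)})$ terms are seen to drop out the equivalence is immediate.
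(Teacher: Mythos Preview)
Your proposal is correct and follows essentially the same route as the paper's proof: expand each discord as a difference of measured and unmeasured conditional entropies, collect them into the sum $\sum_{K=2}^{N-1}\bigl[\widetilde{I_{A_1}}(\rho_{A_K:(A_{K+1}\cdots A_N)\mid A_1})-I_{A_1}(\rho_{A_K:(A_{K+1}\cdots A_N)\mid A_1})\bigr]$, and then pass to interaction informations via the cancellation of the $\widetilde I(\rho_{A_K(A_{K+1}\cdots A_N)})$ terms. The only difference is presentational: you spell out the telescoping identity $\sum_K(S(\rho_{A_K\mid A_1})+T_{K+1}-T_K)$ explicitly, whereas the paper simply asserts the resulting decomposition in one displayed chain of equalities; your remark about the separately optimised measurements on $A_1$ is also a point the paper leaves implicit.
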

\begin{proof}
Here, we only need to calculate the monogamy deficit,
\begin{eqnarray}
\triangle_{D_{A_{1\left(N\right)}}}^{\rightarrow}&=&
D^{\rightarrow}\left(\rho_{A_{1}\mid A_{2}\cdots A_{N}}\right)
-D^{\rightarrow}\left(\rho_{A_{1}A_{2}}\right)-D^{\rightarrow}\left(\rho_{A_{1}A_{3}}\right)\notag \\
& &-\cdots-D^{\rightarrow}\left(\rho_{A_{1}A_{N}}\right)\notag \\
&=&[\widetilde{I_{A_{1}}}\left(\rho_{A_{2}:\left(A_{3}\cdots A_{N}\right)\mid A_{1}}\right)-I_{A_{1}}\left(\rho_{A_{2}:\left(A_{3}\cdots A_{N}\right)\mid A_{1}}\right)]\notag \\
& &
+[\widetilde{I_{A_{1}}}\left(\rho_{A_{3}:\left(A_{4}\cdots A_{N}\right)\mid A_{1}}\right)-I_{A_{1}}\left(\rho_{A_{3}:\left(A_{4}\cdots A_{N}\right)\mid A_{1}}\right)]\notag \\
& &
+\cdots+[\widetilde{I_{A_{1}}}\left(\rho_{A_{N-1}:A_{N}\mid A_{1}}\right)-I_{A_{1}}\left(\rho_{A_{N-1}:A_{N}\mid A_{1}}\right)]\notag \\
&=&\sum_{K=2}^{N-1}[\widetilde{I_{A_{1}}}\left(\rho_{A_{K}:\left(A_{K+1}\cdots A_{N}\right)\mid A_{1}}\right)-I_{A_{1}}\left(\rho_{A_{K}:\left(A_{K+1}\cdots A_{N}\right)\mid A_{1}}\right)]\notag \\
&=&\sum_{K=2}^{N-1}\widetilde{I_{A_{1}}}\left(\rho_{A_{1}A_{K}\left(A_{K+1}\cdots A_{N}\right)}\right)
\nonumber \\
&&-\sum_{K=2}^{N-1}I_{A_{1}}\left(\rho_{A_{1}A_{K}\left(A_{K+1}\cdots A_{N}\right)}\right)_{\left\{ M_{i}^{A_{1}}\right\} }.
\label{disefi123}
\end{eqnarray}
From (\ref{disefi123}), we have $\triangle_{D_{A_{1\left(N\right)}}}^{\rightarrow}\geq 0$ if and only if
$\sum_{K=2}^{N-1}\widetilde{I_{A_{1}}}\left(\rho_{A_{1}A_{K}\left(A_{K+1}\cdots A_{N}\right)}\right)\geq \sum_{K=2}^{N-1}I_{A_{1}}\left(\rho_{A_{1}A_{K}\left(A_{K+1}\cdots A_{N}\right)}\right)_{\left\{ M_{i}^{A_{1}}\right\} }$.

Similarly, we can also get a necessary and sufficient condition for
\begin{eqnarray}
& &\triangle_{D_{A_{1\left(N\right)}}}^{\leftarrow}=\notag \\
& &D^{\leftarrow}\left(\rho_{A_{1}\mid A_{2}\cdots A_{N}}\right)-D^{\leftarrow}\left(\rho_{A_{1}A_{2}}\right)-D^{\leftarrow}\left(\rho_{A_{1}A_{3}}\right)\notag \\
& &-\cdots-D^{\leftarrow}\left(\rho_{A_{1}A_{N}}\right)\notag \\
&=&\sum_{K=2}^{N-1}[\widetilde{I_{A_{1}}}\left(\rho_{A_{1}A_{K}\left(A_{K+1}\cdots A_{N}\right)}\right)-I\left(\rho_{A_{1}A_{K}\left(A_{K+1}\cdots A_{N}\right)}\right)].
\label{disefi1234}
\end{eqnarray}
Where there is no measurement contained in $\widetilde{I_{A_{1}}}\left(\rho_{A_{1}A_{K}\left(A_{K+1}\cdots A_{N}\right)}\right)$, while local measurements
$\{\mathcal{M}_i^{A_m}\}$ ($m=2,\ldots, N$) and coherent measurements $\{\mathcal{M}_i^{(A_k\ldots A_N)}\}$ ($k=2,\ldots, N-1$) contained in
$I\left(\rho_{A_{1}A_{K}\left(A_{K+1}\cdots A_{N}\right)}\right)]$.
\end{proof}

From the above proof, we have $\triangle_{D_{A_{1\left(N\right)}}}^{\leftarrow}\geq0$
if and only if $\sum_{K=2}^{N-1}\widetilde{I_{A_{1}}}\left(\rho_{A_{1}A_{K}\left(A_{K+1}\cdots A_{N}\right)}\right)\geq \sum_{K=2}^{N-1} I\left(\rho_{A_{1}A_{K}\left(A_{K+1}\cdots A_{N}\right)}\right)$.

\section{Summary and discussion}\label{summ}
We have introduced the concept of monogamy deficit by combining together the monogamy inequalities of quantum correlation for multipartite quantum system.
Although two types of monogamy inequalities seem very different on their measurement sides,
based on the concept of monogamy deficit, we have observed a relation
and provided the difference between them. Using this relation, we obtain a unified physical interpretation for these two monogamy deficit.
In addition, we find an interesting fact that there exists a general monogamy condition for
several quantum correlations for tripartite pure states.  By using the concept of
interaction information with respect to one particle, we have proved that the necessary and sufficient condition for
the quantum correlation being monogamous is that
the interrogated interaction information with respect to one particle is less than or equal to the unmeasured interaction information.
Our result can be generalized to $N$-partite system and may have applications in quantum information processing.

\acknowledgements
We thank L. Chen for useful comments.
This work is supported by ``973'' program (2010CB922904) and NSFC (11075126, 11031005, 11175248).

\end{document}